\newcommand\NAE{\operatorname{NAE}}
\newtheorem{theorem}{Theorem}
\newtheorem{lemma}[theorem]{Lemma}
\theoremstyle{remark}
\theoremstyle{definition}
\let\epsilon=\varepsilon
\def\defn#1{\textbf{\textit{\boldmath #1}}}
\author{Erik D. Demaine\thanks{Computer Science and Artificial Intelligence Laboratory, Massachusetts Institute of Technology, Cambridge, MA 02139, USA. \protect\url{edemaine@mit.edu} }
\and Kritkorn Karntikoon\thanks{Department of Computer Science, Princeton University, Princeton, NJ 08544, USA. \protect\url{kritkorn@princeton.edu}} \and Nipun Pitimanaaree\thanks{Alpha Finance Lab, Bangkok, Thailand. \protect\url{nipun@alphafinance.io}}}
\title{2-Colorable Perfect Matching is NP-complete in \\ 2-Connected 3-Regular Planar Graphs}
\date{}
\begin{document}

\maketitle
\begin{abstract}
    The 2-colorable perfect matching problem asks whether a graph can be
colored with two colors so that each node has exactly one neighbor
with the same color as itself.
We prove that this problem is NP-complete,
even when restricted to 2-connected 3-regular planar graphs.
In 1978, Schaefer proved that this problem is NP-complete in
general graphs, and claimed without proof that the same result holds when
restricted to 3-regular planar graphs.
Thus we fill in the missing proof of this claim,
while simultaneously strengthening to 2-connected graphs
(which implies existence of a perfect matching).
We also prove NP-completeness of $k$-colorable perfect matching,
for any fixed $k \geq 2$.

\end{abstract}

\section{Introduction}\label{section-intro}
Schaefer's \textit{The Complexity of Satisfiability Problems}
\cite{schaefer1978complexity} is a seminal paper in NP-completeness,
presenting his famous dichotomy theorem characterizing the complexity of
satisfiability problems.  It introduced and proved NP-complete two new problems
--- One-in-Three Satisfiability (a.k.a.\ Positive 1-in-3SAT) and
Not-All-Equal Satisfiability (a.k.a.\ Positive NAE 3SAT) ---
that have served as the basis for countless reductions since.

Schaefer's paper also introduced and proved NP-complete a third lesser-known
problem, called \defn{2-colorable perfect matching}:
given a graph $G$, can the nodes of $G$ be colored with two colors
so that each node has exactly one neighbor the same color as itself?
This problem served as an example application of NAE 3SAT: Schaefer showed
a fairly simple reduction from NAE 3SAT to 2-colorable perfect matching
in general graphs.  Then he made a tantalizing claim:

\begin{center}
  \includegraphics[width=0.75\textwidth]{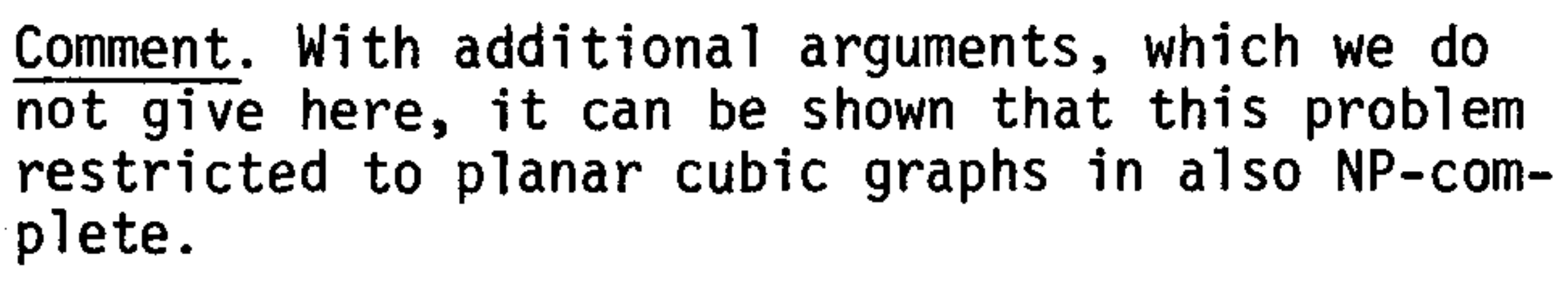}
\end{center}

%\begin{figure}[h]
%    \centering
%    \includegraphics[width=0.75\textwidth]{figures/schaefer's_comment.png}
%    \caption{Schaefer's comment in The Complexity of Satisfiability Problems suggests that the 2-colorable perfect matching problem can be extended.}
%    \label{figure: schaefer's comment}
%\end{figure}

\noindent
As far as we know, this claim has never been proved in the literature.

In this paper, we fill in this gap left by Schaefer's paper,
by proving that 2-colorable perfect matching in 3-regular planar graphs
is NP-complete.
We also extend the result in two ways:
by restricting the graph to be 2-connected,
and by generalizing to $k$-colorings for $k \geq 2$.
Adding 2-connectivity is particularly interesting for this problem
because 2-connected 3-regular graphs always have a perfect matching
\cite{petersen1891}, which can be found in near-linear time and, when
further restricted to planar graphs, in linear time \cite{MatchingJAlgorithms}.
%Having a perfect matching is a necessary condition for 2-colorable perfect
%matching
Thus the graphs we consider always have perfect matchings, but finding a
2-colorable perfect matching remains NP-complete.

The rest of this paper is organized as follows.
%We prove NP-hardness of 2-colorable perfect matching in 3-regular planar graphs is NP-complete by reducing from the Clause-Connected Positive NAE 3SAT problem.
In Section~\ref{section-NAE}, we present the NAE 3SAT problem and prove
NP-completeness of a stronger form called Clause-Connected Positive NAE 3SAT,
Section~\ref{section-planar} then shows our main reduction from this problem
to 2-colorable perfect matching in 2-connected 3-regular planar graphs.
Then Section~\ref{section-k_colors} proves NP-hardness of
$k$-colorable perfect matching
(without the 2-connectivity, 3-regularity, or planarity restrictions).
%Finally, Section~\ref{section-conclusion} discusses possible further extensions.

\section{Clause-Connected Positive NAE 3SAT}\label{section-NAE}
In this section, we define and prove NP-hard a problem called
``Clause-Connected Positive NAE 3SAT'', which is what we reduce from
to prove NP-hardness of 2-colorable perfect matching in
2-connected 3-regular planar graphs.

Recall that \defn{Positive%
\footnote{We add the ``Positive'' adjective to make clear that the problem
  does not allow negated literals of the form $\neg x_i$ to appear in a clause.
  Schaefer's original definition of Not-All-Equal Satisfiability
  also did not allow negations.}
NAE 3SAT} \cite{schaefer1978complexity} asks whether variables
$x_1, x_2, \dots, x_n$ can be assigned binary values
(say, among \textsc{true} or \textsc{false}) in order to satisfy
given clauses $c_1, c_2, \dots, c_m$,
where each clause is of the form $\NAE(x_i, x_j, x_k)$,
meaning that $x_i$, $x_j$, and $x_k$ must not be assigned all the same value.

%Like the 3SAT problem, an instance of NAE 3SAT consists of a collection of \textit{Boolean variables} and a collection of \textit{clauses}. Each Boolean variable can be assigned a value of \textsc{True} or \textsc{False}. Each clause also contains three variables or negations of variables in total; i.e., each clause contains $(l_1, l_2, l_3)$ where each $l_i$ is a variable or a negation of a variable. However, unlike the 3SAT problem which requires every clause to be evaluated to be \textsc{True}, the NAE 3SAT problem asks if there is an assignment such that the three values in each clause are not all equal to each other. It can be proven that the NAE 3SAT problem is NP-complete by reducing from the 3SAT problem.

%A clause is \textit{positive} if it does not contain any negation of variable. An NAE 3SAT formula is \textit{positive} if every clause is positive. We refer to this problem as \textit{Positive NAE 3SAT}, which remains NP-complete due to Schaefer's dichotomy theorem.

\defn{Clause-Connected Positive NAE 3SAT} further requires that the variables from each clause be connected via shared variables. More formally, if we construct a graph $G$ where each node represents each clause, and two nodes are connected if the corresponding clauses share a variable, then the graph $G$ has to be connected.

\begin{theorem} Clause-Connected Positive NAE 3SAT is NP-complete.
\end{theorem}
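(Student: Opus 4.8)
The plan is a direct reduction from ordinary Positive NAE 3SAT, which is NP-complete by Schaefer~\cite{schaefer1978complexity}. Membership in NP is immediate: a satisfying assignment is a polynomial-size certificate, and the requirement that the clause graph be connected is a property of the input instance (checkable in polynomial time, or treated as a promise), so it does not affect containment in NP. The real content is the hardness direction, where the only thing to design is a way to ``stitch'' an arbitrary Positive NAE 3SAT instance into a clause-connected one without changing whether it is satisfiable.

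First I would, given a Positive NAE 3SAT instance $\phi$ with clauses $c_1,\dots,c_m$ over variables $x_1,\dots,x_n$, build its clause graph $G$ and compute its connected components, say with clause sets $C_1,\dots,C_t$. For each component $C_j$ pick an arbitrary clause $d_j\in C_j$ and an arbitrary variable $r_j$ occurring in $d_j$ (which exists, since every clause has three variables). Note $r_1,\dots,r_t$ are pairwise distinct: a variable shared by two clauses would place them in the same component of $G$. Now introduce fresh variables $z_2,\dots,z_t$ (not among $x_1,\dots,x_n$) and, for each $j=2,\dots,t$, add the clause $e_j := \NAE(r_1,r_j,z_j)$; call the resulting instance $\phi'$. (A single reused fresh variable would also work; using $t-1$ distinct ones just makes the bookkeeping cleaner.) This is clearly polynomial-time computable, and if $t=1$ then $\phi$ is already clause-connected and $\phi'=\phi$.

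Next I would verify the two required properties. Connectivity: in the clause graph $G'$ of $\phi'$, every adjacency of $G$ survives (adding clauses never removes a shared variable), each $e_j$ is adjacent to $d_1$ via $r_1$ and to $d_j$ via $r_j$, and the fresh $z_j$ create no other adjacencies; hence $C_1\cup\{e_2,\dots,e_t\}$ is connected and each $C_j$ attaches to it through $e_j$, so $G'$ is connected. Equivalence: any satisfying assignment of $\phi'$ restricts to one of $\phi$, since $\phi$'s clauses are among $\phi'$'s; conversely, given a satisfying assignment of $\phi$, extend it by setting each $z_j$ to the opposite value of $r_1$, which satisfies $e_j=\NAE(r_1,r_j,z_j)$ regardless of the value of $r_j$. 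Thus $\phi$ is satisfiable iff $\phi'$ is.

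There is no serious obstacle here; the two points requiring care are (i) choosing the connector clauses so they are \emph{unconditionally} satisfiable once the $z_j$ are set, so a no-instance cannot become a yes-instance, and (ii) being explicit that connectivity of the clause graph is a structural property of the instance, not something the verifier must certify, so NP membership is untouched. I expect the author's proof to follow essentially this template, perhaps linking the components along a path $r_1,r_2,\dots,r_t$ rather than a star centered at $r_1$, which is an inessential variation.
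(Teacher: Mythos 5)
Your proposal is correct and follows essentially the same approach as the paper: the paper also adds connector clauses, each consisting of one variable from each of two unconnected clauses plus a fresh variable, and satisfies them by setting the fresh variable opposite to one of the others (it merely merges components iteratively rather than via a star). The only difference is bookkeeping, not substance.
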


\begin{proof}
We prove NP-hardness by reducing from Positive NAE 3SAT.
Suppose we are given a Positive NAE 3SAT instance $(V, C)$, where $V$ is a set of variables and $C$ is a set of clauses appearing in a formula of the instance.
Our reduction produces an instance of Clause-Connected Positive NAE 3SAT as follows:
\begin{enumerate}
    \item Start with the same formula $(V, C)$.
    \item If the clauses are not connected, then there are at least two distinct clauses, $c_1$ and $c_2$, with no variable in common. Then we add a new NAE clause consisting of one variable from $c_1$, one variable from $c_2$, and one new variable~$v'$. For example, if there are two clauses, $\NAE(v_1, v_2, v_3)$ and $\NAE(v_4, v_5, v_6)$ where all $v_i$s are different, then we add a new clause $\NAE(v_1, v_4, v')$ to the formula.
    \item Repeat the previous step until the clauses are connected. Because we reduce the number of connected components of clauses in each step (and the newly introduced variable is always connected), the algorithm will terminate in $O(|C|)$ steps.
\end{enumerate}
The result of this algorithm is clearly an instance of Clause-Connected Positive NAE 3SAT.
To show that this reduction is correct, we need to show that a solution to either of the instances implies the existence of a solution to the other instance.

Suppose first that there is a solution to the Positive NAE 3SAT instance. Then the same assignment to the original variables will satisfy all original clauses. For each new NAE clause created in Step~2, we can assign the truth value to the new variable $v'$ to be the opposite of one of the other variables, thereby satisfying that clause no matter how the other two variables were already assigned. Because this variable does not appear in any other clause, we obtain a satisfying assignment.

On the other hand, if there is a solution to the Clause-Connected Positive NAE 3SAT instance, then this assignment satisfies the original clauses from the Positive NAE 3SAT instance as well. This is because the set of clauses in the Positive NAE 3SAT instance is a subset of the clauses in the Clause-Connected Positive NAE 3SAT instance.
\end{proof}

Note that this reduction is not parsimonious, i.e.,
it does not preserve the number of solutions.

% \todo{formalize the proof}
% The proof idea is to reduce from Positive NAE 3SAT problem. If the instance is already connected then we do not have to modify anything. Otherwise, there are at least two groups of clauses that are disconnected, we can make it connected by constructing a new clause with one variable from each group and another new variable. With simple arguments, we can show that the reduction is correct, and it is also trivial to show that this problem is in NP.

\section{Planar 2-Connected 3-Regular 2-Colorable Perfect Matching}\label{section-planar}
In this section, we prove NP-hardness of 2-colorable perfect matching when restricted to 2-connected 3-regular planar graphs, by a reduction from Clause-Connected Positive NAE 3SAT (from Section~\ref{section-NAE}).
%More formally, the problem we prove NP-hard is the following:
%given a 2-connected 3-regular planar graph $G$, is it possible to color all nodes in $G$ with two colors such that each node has exactly one neighbor of the same color?
Overall, we follow the structure of the reduction from NAE 3SAT to 2-colorable perfect matching (in general graphs) sketched in \cite{schaefer1978complexity},
which builds a clause gadget to enforce NAE constraints and an equal-colors gadget to make copies of variables.
Our reduction uses the same clause gadget
(described in Section~\ref{sec:clause}),
but differs everywhere else:
we construct a \emph{different}-colors (negation) gadget
in Section~\ref{sec:different} to achieve 3-regularity;
a degree-reduction gadget in Section~\ref{sec:degree}
to achieve 3-regularity,
and a crossover gadget in Section~\ref{sec:crossover} to achieve planarity.
Reducing from Clause-Connected Positive NAE 3SAT lets us achieve 2-connectivity.

%\subsection{Gadgets} \label{subsection: gadget}
%First we construct gadgets to represent each variable, and to preserve the properties of the Connected Positive NAE 3SAT instance.

\subsection{Clause Gadget}\label{sec:clause}
The \defn{clause gadget} is a complete graph $K_4$ on four nodes,
as shown in Figure~\ref{figure: clause gadget},
For a clause $\NAE(x,y,z)$, three of the nodes are \defn{variable} nodes,
labeled $x$, $y$, and $z$; while the fourth (middle) node is a \defn{clause}
node unique to the clause, so we label it $(x,y,z)$.
By coloring this graph with the colors ``$0$'' and ``$1$'',
each variable is assigned truth values according to the color.
The requirement of 2-colorable perfect matching forces node $(x,y,z)$
to match with exactly one of the three variable nodes,
forcing the other two variable nodes to have the opposite color
from this matched pair (and thus match with each other).
Therefore, 2-colorable perfect matchings correspond exactly to
not-all-equal colorings of the variable nodes $x,y,z$,
where the clause node always gets the minority color among $x,y,z$.

\begin{figure}[h]
    \centering
    \includegraphics[width=0.25\textwidth]{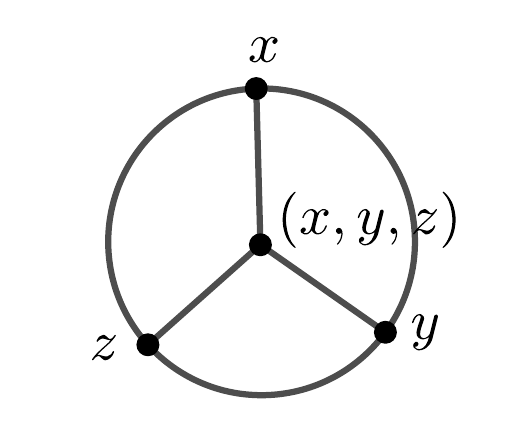}
    \caption{Clause gadget for $\NAE(x,y,z)$, consisting of three variable nodes $x,y,z$ and one clause node $(x,y,z)$ connected in a clique.}
    \label{figure: clause gadget}
\end{figure}

\subsection{Different-Colors Gadget}\label{sec:different}
The \defn{different-colors gadget} ensures that two nodes $x,y$ have different colors.
The two nodes $x,y$ are connected together by a 3-regular planar graph
as shown in Figure~\ref{figure: different color},
which also shows what we claim is the unique 2-coloring up to reflection.
First we show that the coloring of the central square is forced,
up to reflection:
\begin{itemize}
\item it cannot repeat a color more than twice in cyclic order,
  or else a middle node would have too many same-colored neighbors;
\item it cannot be alternating,
  because then some color would lack a same-color neighbor; and
\item it cannot have monochromatic top and bottom edges,
  because then some color would have an extra same-color neighbor;
\item so it must have monochromatic left and right edges.
\end{itemize}
Then the remaining vertices' colors are forced from the inside out.

This gadget has three additional properties that we will need later:
\begin{enumerate}
\item All eight internal nodes in the gadget are matched with each other, while the nodes $x$ and $y$ do not match with them. This will let us use this gadget in order to increase the degree of nodes $x$ and~$y$.
\item As a replacement for an edge $(x,y)$, the gadget preserves the properties of being 3-regular, planar, and 2-connected.
%, so it can be added to a graph without violating these properties.
\item The gadget contains a cut (drawn brown dotted in Figure~\ref{figure: different color}) such that every edge across the cut has differently colored endpoints. We call this cut the \defn{different-colors cut}.
%With this property, we can choose the intersection to correspond to the cut, resulting in the intersection of two edges having differently colored endpoints.
\end{enumerate}

\begin{figure}%[h]
    \centering
    \includegraphics[width=0.5\textwidth]{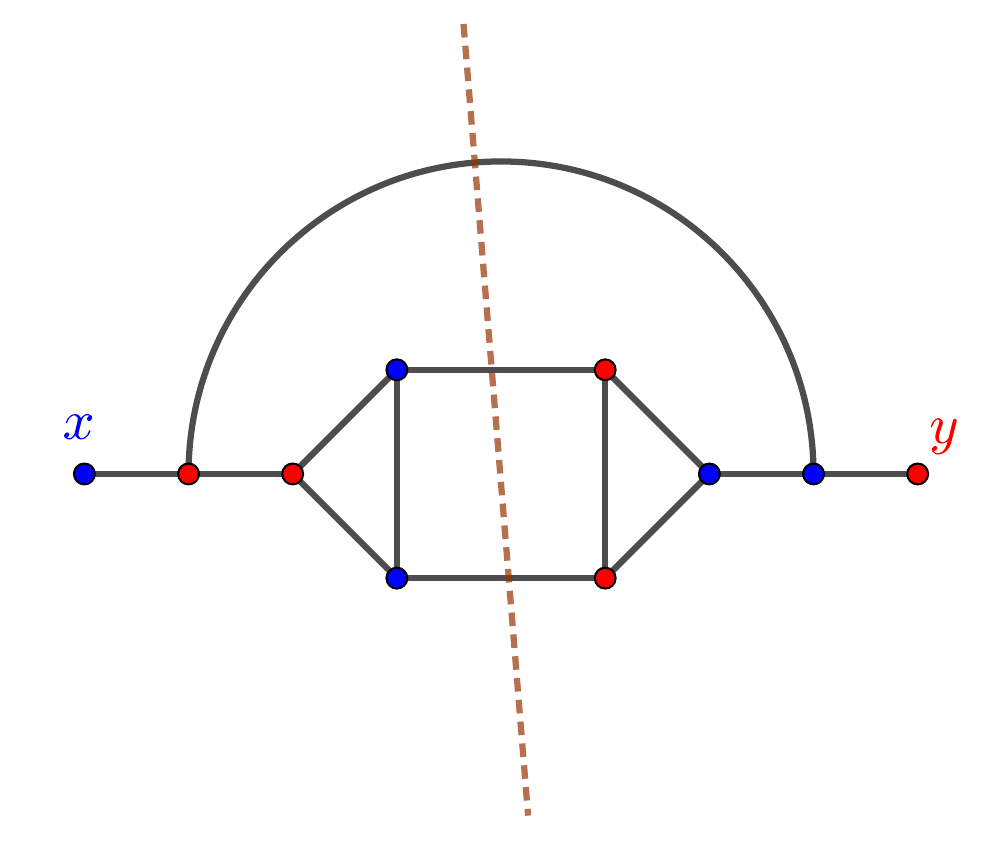}
    \caption{Different-colors gadget, which connects nodes $x$ and $y$
      and ensures that $x$ and $y$ have different colors.
      The brown dotted line is a different-colors cut.}
    \label{figure: different color}
\end{figure}

\subsection{Degree-Reduction Gadget}\label{sec:degree}
The \defn{degree-reduction gadget} reduces the degree of a
high-degree node~$x$, as shown in Figure~\ref{figure: degree reduction}:
if $x$ has $k > 3$ neighbors $y_1, y_2, \dots, y_k$,
then we split $x$ into $x_1, x_2, \dots, x_k$,
where each $x_i$ has an edge to $y_i$,
and connect consecutive $x_i$s by a path of length~$4$,
resulting in an overall $x$ path of length $4(k-1)$.
Notably, the intermediate nodes in the path have degree~$2$.

\begin{figure}[h]
    \centering
    \includegraphics[width=0.75\textwidth]{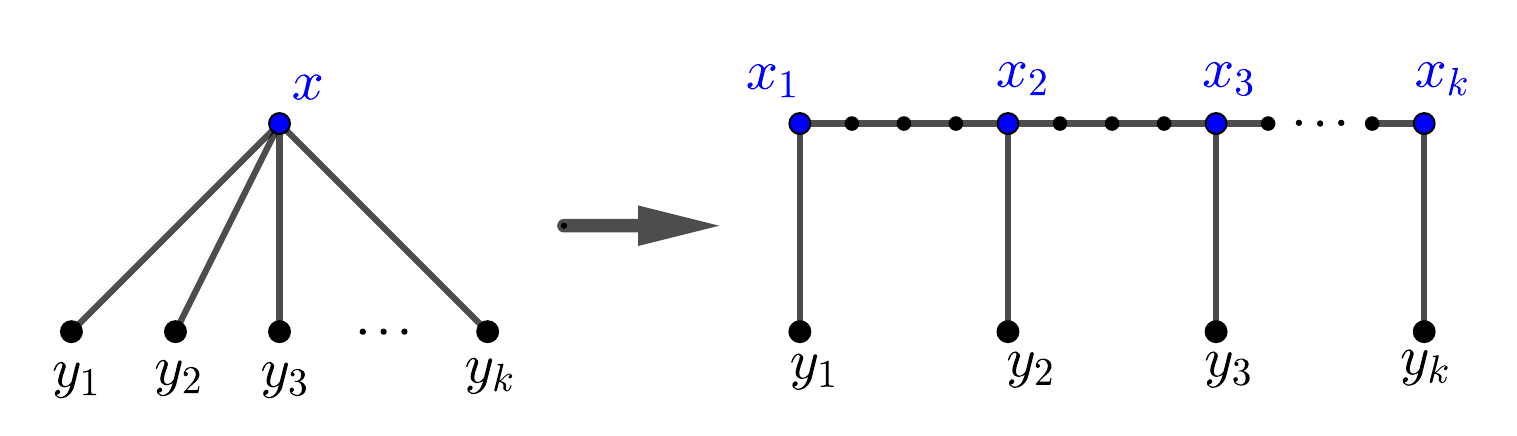}
    \caption{Degree-reduction gadget. For a node $x$ of degree $k > 3$, we create $k$ copies $x_1, x_2, \ldots, x_k$ in a length-$4(k-1)$ path. This gadget ensures that $x_i$s have the same color and exactly one is matched with a neighboring $y_i$.}
    \label{figure: degree reduction}
\end{figure}

We claim that this gadget must match exactly one edge $(x_i, y_i)$.
First, it must match at least one such edge,
because the $x$ path has an odd number of nodes.
Second, if it matched two such edges, consider the nearest pair
$(x_i, y_i)$ and $(x_j, y_j)$; the subpath strictly between $x_i$ and $x_j$
has an odd number of nodes, so it cannot have a perfect matching.
Therefore there is exactly one matched edge $(x_i, y_i)$.

Indeed, any choice of matched edge $(x_i, y_i)$ fixes the parity problem,
splitting the $x$ path into two subpaths with an even number of vertices,
enabling a unique perfect matching.
Consecutive $x_j$s have exactly one matched edge along the $x$ path in between
them, so all $x_j$s have the same color; the matched $y_i$ must also have the
same color; and the remaining $y_j$s must have the opposite color.
Therefore the gadget behaves just like the original $x$ vertex.

Similar to the different-colors gadget, this gadget preserves the properties
of being planar and 2-connected.
%To show its 2-connectedness, we use the fact that $y_i$ and $y_j$ are connected even without node $x$, creating two non-intersecting paths.  
But it introduces vertices of degree $2$, so it does not preserve 3-regularity.
We will fix these degrees later,
in a degree-expansion phase in Section~\ref{subsection: reduction}.

\subsection{Crossover Gadget}\label{sec:crossover}
The \defn{crossover gadget} lets us remove a crossing between two edges
$x y$ and $z w$, provided we want both of those edges to be bichromatic
(i.e., $x$ and $y$ are colored differently, as are $z$ and $w$).
As shown in Figure~\ref{figure: crossover gadget},
we construct four new nodes $x', y', z', w'$ connected in a cycle;
and we connect $x$ to $y'$, $y$ to $x'$, $z$ to $w'$, and $w$ to $z'$
using different-colors gadgets (drawn as $\neq$ circles in the figure). 

\begin{figure}[h]
    \centering
    \includegraphics[width=0.75\textwidth]{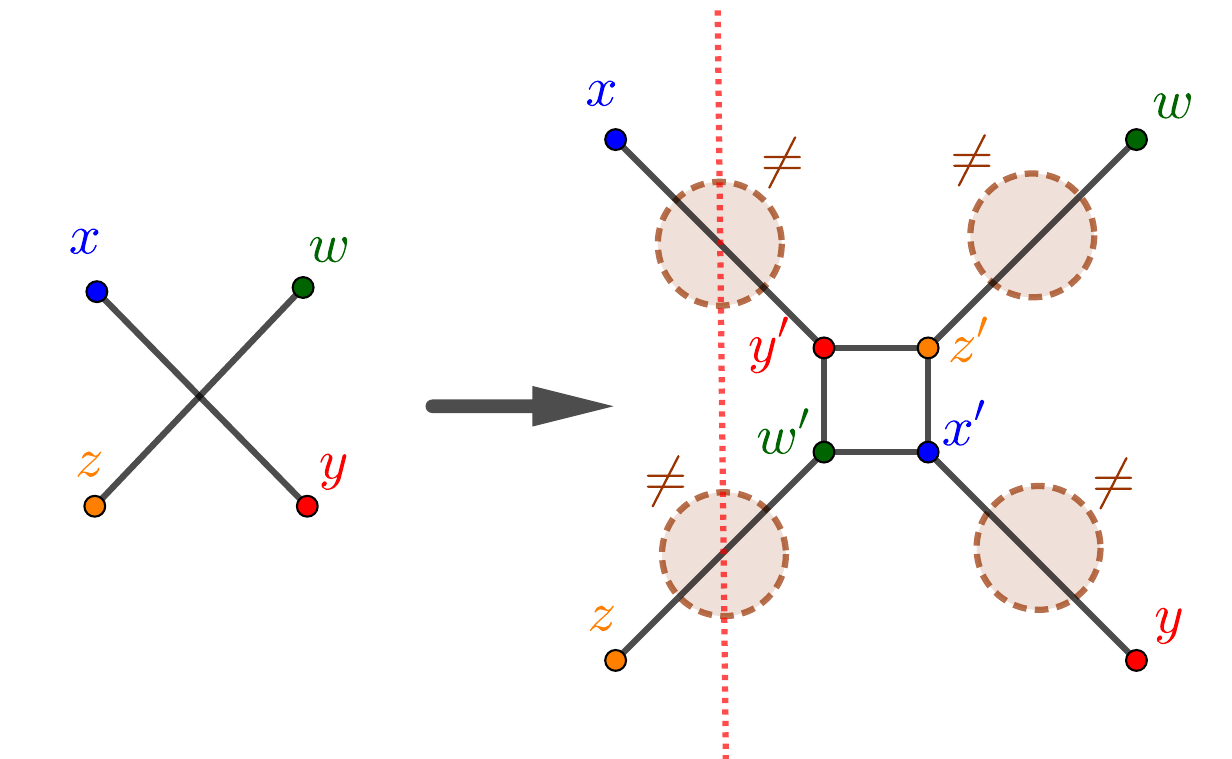}
    \caption{Crossover gadget,
      which uncrosses two bichromatic edges $x y$ and $z w$.
      This gadget uses four different-colors gadgets
      from Figure~\ref{figure: different color},
      represented by $\neq$ circles,
      to connect original nodes $x,y,z,w$
      with four new nodes $x', y', z', w'$.
      The colors in the figure satisfy that red and blue are opposite,
      and green and yellow are opposite, with an unspecified
      correspondence between \{red, blue\} and \{green, yellow\}.
      The red dotted line is a different-colors cut.}
    \label{figure: crossover gadget}
\end{figure}

Because the different-colors gadget does not match its endpoints,
the central square $x', y', z', w'$ must be matched internally.
Thus $x'$ is the same color as either $z'$ or~$w'$,
and opposite both $y'$ and the other of $w'$ or~$z'$.
By the different-colors gadgets, the colors satisfy $x \neq y' \neq x'$,
so $x$ and $x'$ must have the same color;
and symmetrically so do $y$ and $y'$, $z$ and $z'$, and $w$ and~$w'$.
Again by the different-colors gadgets, $x$ and $y$ must have opposite colors,
as must $z$ and~$w$.
By choosing the matching in the central square
(monochromatic top and bottom edges, or monochromatic left and right edges),
we can freely choose the coloring of $x \neq y$ independent of $z \neq w$.

Similar to the different-colors gadget,
this crossover gadget has the following properties:
\begin{enumerate}
\item All four internal nodes $x', y', z', w'$ are matched with each other,
  so there will be no effect on the original graph in terms of matching.
\item The gadget preserves the properties of being 3-regular and 2-connected,
  and introduces no new crossings.
\item The gadget contains a different-colors cut
  (as defined in Section \ref{sec:different},
  and drawn red dotted in Figure~\ref{figure: crossover gadget}):
  we can combine the two cuts from the different-colors gadgets
  between $x$ and $y'$ and between $z$ and~$w'$.
\end{enumerate}

\subsection{Reduction from Clause-Connected Positive NAE 3SAT}\label{subsection: reduction}
Now that we have the necessary gadgets, we describe our overall reduction
from Connected Positive NAE 3SAT to
2-colorable perfect matching on 2-connected 3-regular planar graphs.
%Given an instance of Clause-Connected Positive NAE 3SAT,
%we construct a graph using the gadgets described above by the following steps:
We construct the desired graph in the following steps:
\begin{enumerate}
\item Create clause gadgets representing the NAE 3SAT instance.
\item Connect shared variables via a chain of two different-colors gadgets.
\item Duplicate the entire graph to make the graph 2-connected.
\item Decrease degrees $>3$ via the degree-reduction gadget.
\item Increase degrees from $2$ to $3$.
\item Add crossovers to make the graph planar.
\end{enumerate}

First, for each clause in the NAE 3SAT instance,
we create a clause gadget corresponding to it,
as described in Section~\ref{sec:clause} and Figure~\ref{figure: clause gadget}.
Second, when two nodes $v_1$ and $v_2$ of two clause gadgets
are labeled with the same variable,
we connect them with a chain of two different-colors gadgets,
as shown in Figure~\ref{figure: step2 connect variables}
(where a $\neq$ circle represents a different-colors gadget). 

\begin{figure}[h]
    \centering
    \includegraphics[width=0.5\textwidth]{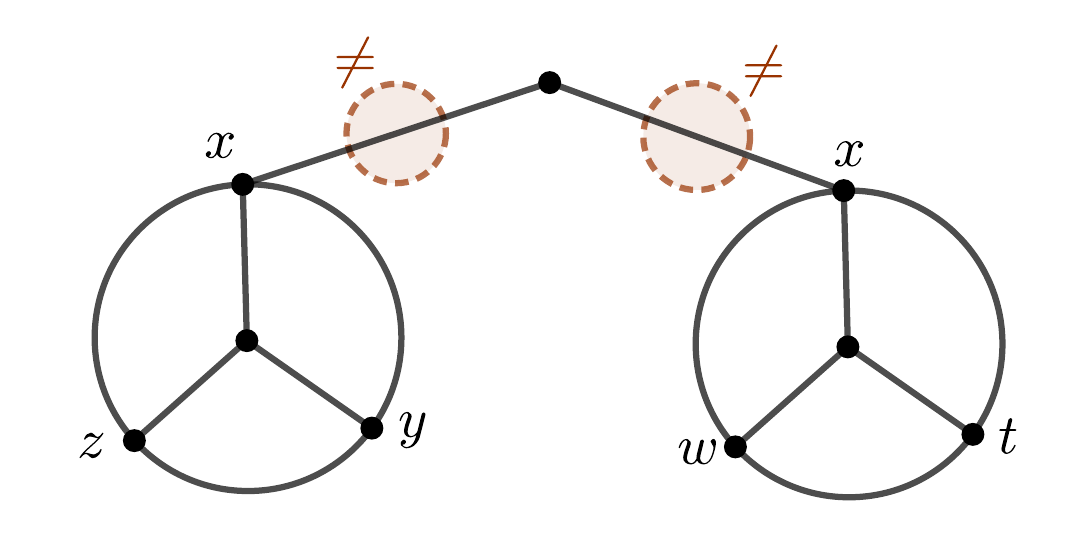}
    \caption{How we connect two nodes from different clause gadgets that correspond to the same variable~$x$: using a chain of two different-colors gadgets, represented by $\neq$ circles.}
    \label{figure: step2 connect variables}
\end{figure}

Third, we duplicate the entire graph,
and connect each variable node with its clone by a different-colors gadget,
as shown in Figure~\ref{figure: step3 copy graph}
(where the different-colors gadgets are represented by orange dotted lines).
In this step, we do not connect other (nonvariable) nodes with their clones,
in order to preserve their degrees.
This step achieves 2-connectivity:

\begin{figure}[h]
    \centering
    \includegraphics[scale=0.4]{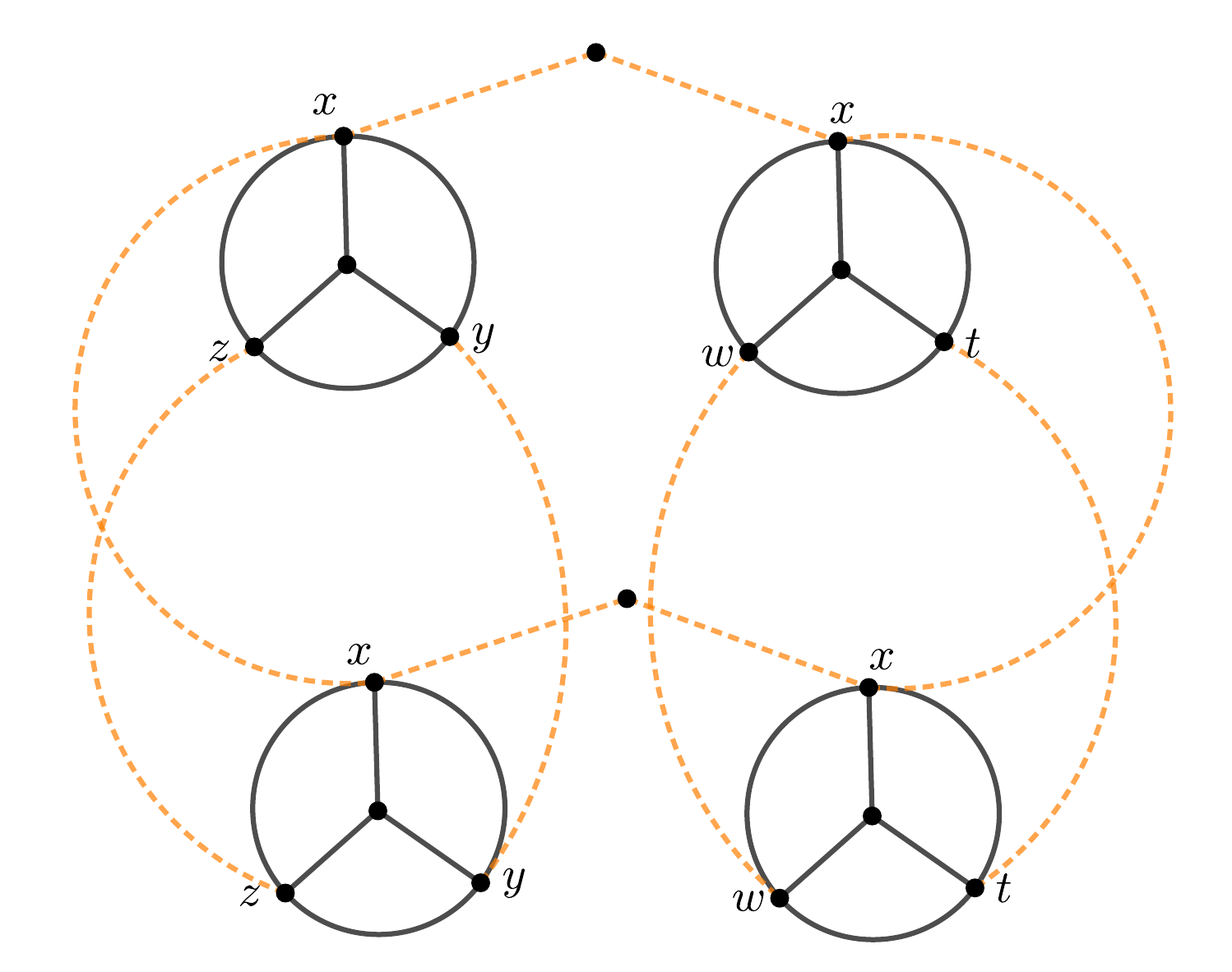}
    \caption{Duplicating the entire graph, and connecting each variable node with its clone by a different-colors gadget (represented by an orange dotted line).}
    \label{figure: step3 copy graph}
\end{figure}

\begin{lemma}
  This duplicated graph is 2-connected.
\end{lemma}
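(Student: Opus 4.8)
The plan is to check the two defining conditions of 2-connectivity: that $G$ is connected, and that $G$ has no cut vertex. Connectivity is immediate. Let $H$ denote the graph produced by Steps~1--2 and $H'$ its clone. Each of $H$ and $H'$ is connected: because the NAE instance is clause-connected, any two clause gadgets are mutually reachable through the different-colors-gadget chains that link their shared variable nodes, and every remaining vertex lies on such a chain and hence hangs off a clause gadget. Since the instance has at least one variable, at least one different-colors gadget links a variable node of $H$ to its clone in $H'$, so $G$ is connected.

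For the absence of a cut vertex I would fix an arbitrary vertex $w$ and prove $G - w$ is connected, according to where $w$ lies. If $w$ is an internal vertex of a different-colors gadget $g$ that links a variable node $v \in H$ to its clone $v' \in H'$, then since $g$ is a 2-connected replacement for the edge $vv'$, the graph $g - w$ remains connected and still contains $v$ and $v'$; hence $G - w \supseteq H \cup H' \cup (g - w)$, which is connected. If $w$ lies in $H$ (the case $w \in H'$ is symmetric), then $H'$ is untouched and therefore connected, and I claim that every connected component of $H - w$ contains a variable node (necessarily distinct from $w$). Granting the claim, each component $C$ of $H - w$ has a variable node $u \neq w$ whose clone-linking gadget to $u' \in H'$ is entirely present in $G - w$ — its interior is disjoint from $V(H)$ and its endpoint $u$ survives — so $C$ is joined to $H'$; as this holds for every component, $G - w$ is connected.

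It then remains to prove the claim. Here I would use the partition of $V(H)$ into variable nodes, clause nodes (one per clause gadget), and internal nodes of different-colors gadgets. A clause node $m_c$ has exactly three neighbors, all of which are variable nodes of clause $c$'s clique, so after deleting the single vertex $w$ it still has a variable-node neighbor if it survives; thus any component containing a clause node contains a variable node. An internal node of a different-colors gadget $g$ stays, together with the surviving part of $g$, attached to at least one endpoint of $g$ (since $g$ minus a single vertex remains connected and reaches both endpoints unless that vertex is itself an endpoint, and at most one endpoint equals $w$); each gadget endpoint is either a variable node or the midpoint of a length-two chain whose other gadget then reaches a variable node, and $w$ can block only one of the two directions, so every surviving gadget-internal node lies in a component containing a variable node. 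As $V(H)$ contains no other kind of vertex, the claim — and hence 2-connectivity of $G$ — follows.

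The step I expect to be the main obstacle is the last paragraph: making the case analysis over gadget-internal and clause nodes airtight, in particular handling the chain-of-two-gadgets structure and ruling out a component of $H - w$ that is trapped entirely among non-variable internal nodes. One clean way to avoid most of this bookkeeping is to first contract every different-colors gadget to a single edge, prove 2-connectivity of the resulting multigraph $\hat G$ — two copies of a $K_4$-based graph with corresponding variable vertices joined by edges — by the same short case analysis on $w$, and then use the gadget's edge-replacement property (that replacing an edge of a 2-connected graph by the different-colors gadget yields a 2-connected graph) to transfer 2-connectivity back to $G$; in $\hat G$ only variable and clause vertices remain, so the trapped-component worry disappears.
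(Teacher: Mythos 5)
Your proof is essentially correct, but it takes a genuinely different route from the paper. The paper argues via Menger's characterization: for every pair $s,t$ it explicitly constructs two interior-vertex-disjoint paths, the first being a shortest $s$--$t$ path inside one copy, and the second detouring through the clone copy by walking from $s$ and $t$ to nearby variable nodes $v_s,v_t$ not used by the shortest path and crossing over via the variable--clone different-colors gadgets. You instead delete an arbitrary vertex $w$ and show $G-w$ stays connected, reducing everything to the claim that every component of $H-w$ contains a variable node, which then reaches the untouched clone copy; your closing suggestion --- contract each different-colors gadget to an edge, verify 2-connectivity of the resulting two-copies-of-$K_4$-structure multigraph, and transfer back via the gadget's edge-replacement property (the paper's Property~2 in Section~\ref{sec:different}) --- is arguably the cleanest formulation and sidesteps the trapped-component bookkeeping entirely. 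The paper's construction buys an explicit pair of disjoint paths and avoids any case analysis on what kind of vertex is deleted; your deletion argument buys locality (you only ever reason about one gadget at a time) and meshes directly with the gadget properties the paper already asserts. One imprecision to fix: since each endpoint of a different-colors gadget has exactly one edge into the gadget's interior (this is how Step~5 raises degrees by one), the statement that ``$g-w$ remains connected and still contains $v$ and $v'$'' is false when $w$ is the unique interior neighbor of $v$ or $v'$; the property you actually have, and actually need, is that every component of $g-w$ contains an endpoint of $g$. In your Case~A this means $H$ and $H'$ must be re-joined through some \emph{other} variable--clone gadget (there are at least three, one per variable occurrence in a clause), with the surviving interior of $g$ hanging off $v'$; the same weaker property suffices in your component claim for $H-w$. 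With that wording repaired, or with the contraction shortcut adopted, your argument goes through.
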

\begin{proof}
It suffices to show that, for any pair of nodes $(s,t)$ in the graph,
there are two interior-vertex-disjoint paths that connect them. 

First, consider the case where both $s$ and $t$ are original nodes
(or symmetrically, both clone nodes);
refer to Figure~\ref{figure: 2-connected proof a}.
By clause connectivity and the construction,
there is a path between $s$ and $t$ within the
original graph; take the shortest such path as the first path
(drawn red in the figure).

\begin{figure}[h]
    \centering
    \subcaptionbox
      {\label{figure: 2-connected proof a}
        Between two nodes in the same copy of the graph.}
      {\includegraphics[width=0.5\textwidth]{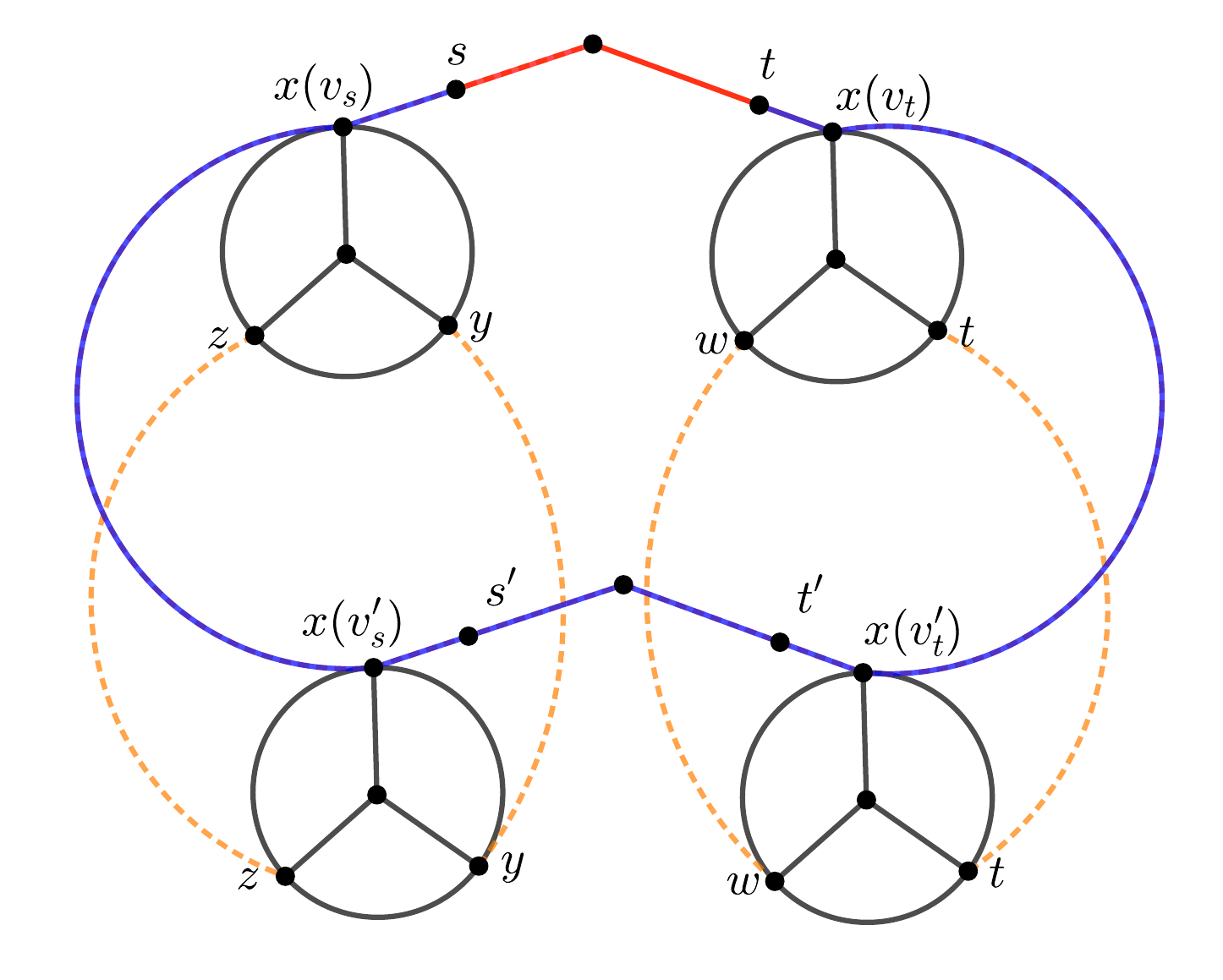}}%
    \hfill
    \subcaptionbox
      {\label{figure: 2-connected proof b}
        Between two nodes in opposite copies of the graph.}
      {\includegraphics[width=0.5\textwidth]{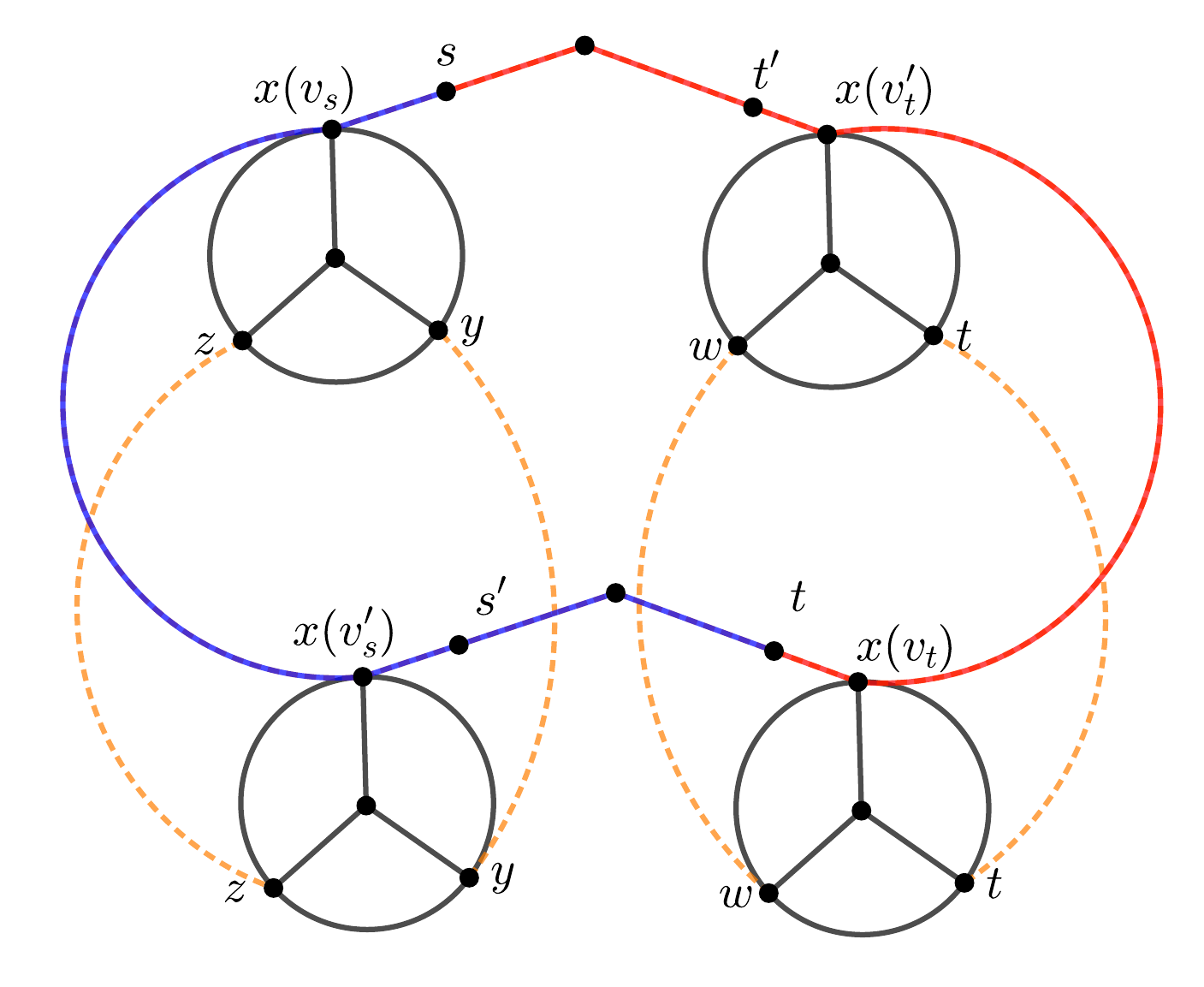}}%
    \caption{Constructing two interior-vertex-disjoint paths (red and blue) between two nodes $s$ and~$t$.}
    \label{figure: 2-connected proof} 
\end{figure}

Now we construct a variable node $v_s$ such that $s$ can travel to $v_s$
without intersecting this shortest path.
If $s$ is a variable node, then take $v_s$ to be $s$ itself.
If $s$ is a clause node, then one of its three neighbors
will not be used in the shortest path
(otherwise, the shortest path could be further shortened).
If $s$ is a part of a different-colors gadget,
then one of the variable endpoints will not be used
(otherwise, the shortest path could again be shortened).

Similarly, we can construct $v_t$ for $t$.
If $v_s = v_t$, then we found another path between $s$ and $t$, as desired.
Otherwise, we pick the following path (drawn blue in the figure):
the path from $s$ to $v_s$,
the edge from $v_s$ to $v_s$'s clone,
the shortest path from $v_s$'s clone to $v_t$'s clone within the clone graph,
the edge from $v_t$'s clone to $v_t$, and
the path from $v_t$ to $t$.
This path will not intersect the shortest path in the original graph,
because the remaining part is in the clone graph.

Second, consider the case where $s$ is an original node and
$t$ is a clone node (or symmetrically, the reverse);
refer to Figure~\ref{figure: 2-connected proof b}.
As before, we construct $v_s$ and $v_t$, but now relative to a shortest path
from $s$ to $t$'s clone (which is in the original graph).
Then we construct the first path (drawn blue in the figure)
by combining the shortest path from $s$ to $t$'s clone,
the path from $t$'s clone to $v_t$'s clone,
the edge from $v_t$'s clone to $v_t$, and
the path from $v_t$ to $t$.
We construct the second path path (drawn red in the figure) by combining
the shortest path from $t$ to $s$'s clone within the clone graph,
the path from $s$'s clone to $v_s$'s clone,
the edge from $v_s$'s clone to $v_s$, and
the path from $v_s$ to~$s$.
\end{proof}

%Next we make the graph 3-regular.
Fourth, we use the degree-reduction gadget to decrease the degrees
of all nodes that have degree more than~$3$.
By our construction (in particular the duplication in Step~3),
these nodes consist of only variable nodes (within clause gadgets)
and their clones.
Thus we can embed the degree-reduction gadgets within the clause gadgets.
Figure~\ref{figure: step5 adjust degree} sketches the result of this step,
but only draws a few of the degree-2 vertices in the degree-reduction gadgets.
%For example, in the figure, the original node $z$
%becomes a group of four nodes connected in a path,
%some of which connect to individual neighbors of the original vertex~$z$.

\begin{figure}[h]
    \centering
    \includegraphics[scale=0.4]{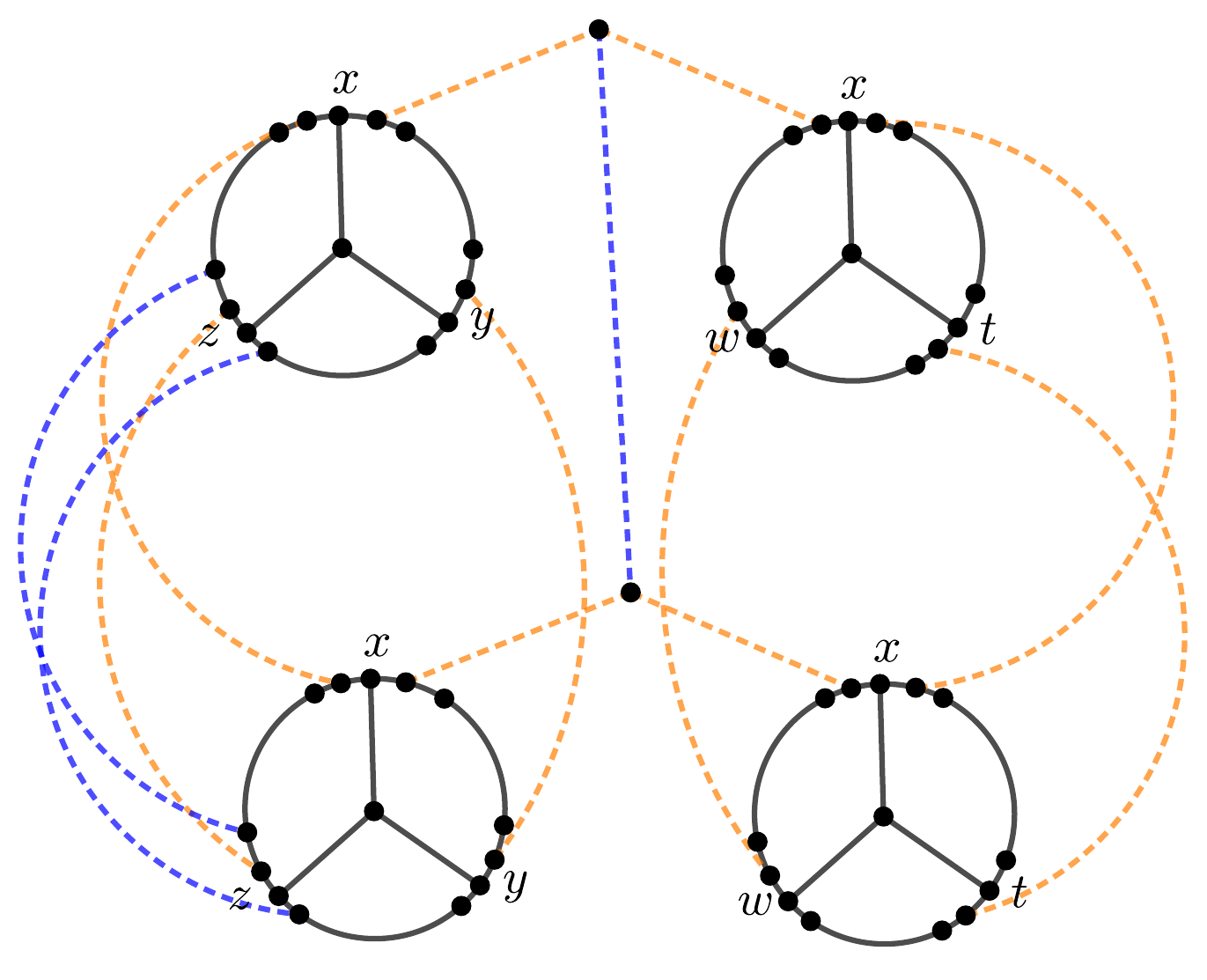}
    \caption{Making the graph 3-regular, using degree-reduction gadgets to split high-degree variable nodes, and different-colors gadgets (drawn as blue dotted lines) between the two copies to raise degrees from 2 to 3, in particular for nodes in degree-reduction gadgets (not all of which are drawn).}
    \label{figure: step5 adjust degree}
\end{figure}

Fifth, we increase the degree of each node that has degree less than~$3$,
which by 2-connectivity must have degree~$2$,
by connecting the node to its clone with a different-colors gadget.
%or two different-colors gadgets in parallel if the node has degree~$1$.
Figure~\ref{figure: step5 adjust degree} draws the different-colors gadgets
added in this step as blue dotted lines
(but there are more than the three drawn connecting lines,
because there are more degree-2 nodes than drawn).
Each node and its clone are processed similarly, so we maintain that they
have the same degree in every step.
Thus, connecting a node of degree less than $3$ with its clone
will increase the degrees for both of them together,
eventually resulting in a 2-connected 3-regular graph.

Finally, in Step~6, we make the graph planar using crossover gadgets.
This requires some care, because it can only cross pairs of bichromatic edges.
Because all ``long distance'' connections between clause gadgets
(including embedded degree-reduction gadgets)
are made via different-colors gadgets,
the only possible intersections are between pairs of different-colors gadgets.
These gadgets have different-colors cuts
(as defined in \ref{sec:different}).
We can arrange for crossings to occur only at the meeting of these cuts
for the two gadgets, so that the crossing edges are bichromatic,
enabling us to use a crossover gadget to remove the crossing.
After we place some crossover gadgets, some crossings between
different-colors gadgets may now instead involve crossover gadgets,
but because crossover gadgets also have different-colors cuts,
we can continue the same construction.
By repeatedly applying a crossover gadget in this way, we remove all crossings.
Because the crossover gadget preserves the properties of being 2-connected
and 3-regular, we obtain a 2-connected 3-regular planar graph, as desired.

\begin{theorem}
  2-colorable perfect matching in 2-connected 3-regular planar graphs is NP-complete.
\end{theorem}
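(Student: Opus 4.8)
The plan is to establish NP-completeness in two parts: membership in NP, and NP-hardness via the chain of gadgets developed in Sections~\ref{sec:clause}--\ref{subsection: reduction}. Membership in NP is immediate: given a 2-coloring of the vertices, we can verify in polynomial time that every node has exactly one same-colored neighbor, which simultaneously certifies that the monochromatic edges form a perfect matching. So the substance is the hardness reduction, and the strategy is to show that the six-step construction of Section~\ref{subsection: reduction} is a correct polynomial-time many-one reduction from Clause-Connected Positive NAE 3SAT, which is NP-hard by Theorem~1.

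First I would argue the reduction runs in polynomial time and outputs a graph of the promised form. The number of clause gadgets is $m$, the number of variable-sharing connections and inter-copy different-colors gadgets is polynomial in $n+m$, each degree-reduction gadget on a degree-$k$ node adds $O(k)$ vertices, each different-colors and crossover gadget is of constant size, and the number of crossings to be removed in Step~6 is polynomial (bounded by the square of the number of gadget-edges in some fixed planar drawing of the intermediate graph), each removed at the cost of a constant-size crossover gadget. The structural guarantees are exactly the ``preservation'' properties already proved for the individual gadgets: Step~3 together with the degree-expansion in Step~5 yields 3-regularity (here I would carefully recheck that after Step~4 the only degree-2 vertices are the path-interior vertices of degree-reduction gadgets and that Step~5's inter-copy different-colors gadgets raise precisely these to degree~3 while keeping the two copies symmetric), the 2-connectivity of the duplicated graph is the Lemma, and each subsequent gadget preserves 2-connectivity and 3-regularity; Step~6 produces planarity, with the crossover gadget introducing no new crossings.

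Next I would prove the correctness equivalence: the NAE 3SAT instance is satisfiable if and only if the output graph has a 2-coloring in which every vertex has exactly one same-colored neighbor. For the forward direction, start from a satisfying NAE assignment, color each variable node (and its clone, oppositely, through the inter-copy different-colors gadget) by its truth value, color each clause node the minority color of its three variables as in Section~\ref{sec:clause}, and propagate the forced colorings through every different-colors, degree-reduction, and crossover gadget using the uniqueness statements proved for each; one then checks this is a legal 2-colorable-perfect-matching coloring. For the reverse direction, given such a coloring of the output graph, the gadget analyses run backwards: the internal vertices of each different-colors, degree-reduction, and crossover gadget are matched internally and force their interface vertices to behave as a negation, an identification, or an uncrossing respectively, so the coloring restricted to the variable nodes of the clause gadgets is well-defined (all copies of a variable agree after accounting for the even/odd parities introduced by the different-colors chains and degree-reduction paths) and, by the clause-gadget analysis, not-all-equal on each clause. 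The main obstacle I expect is Step~6: one must verify that the crossings can always be routed so that each crossing is between two edges that are genuinely bichromatic — i.e.\ that one can perturb the drawing so crossings occur only where the different-colors cuts (and later, crossover-gadget cuts) of the two crossing gadgets meet — and that after inserting a crossover gadget the remaining crossings still have this property (using that the crossover gadget itself carries a different-colors cut), so that the process terminates with all crossings removed and all the structural invariants intact. Getting this ``cuts meet at crossings'' invariant precise, and confirming it is maintained as crossovers are inserted one at a time, is the delicate part; the rest is assembling the gadget lemmas.
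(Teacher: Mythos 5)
Your proposal is correct and follows essentially the same route as the paper: NP membership by direct verification, then correctness of the six-step reduction from Clause-Connected Positive NAE 3SAT, arguing the structural properties (3-regularity, 2-connectivity via the duplication lemma, planarity via crossovers placed at different-colors cuts) and the two-directional equivalence using the individual gadget analyses, including the chain of two different-colors gadgets to make all copies of a variable agree. The Step~6 ``crossings only at meeting cuts'' invariant you flag as delicate is exactly the point the paper also handles (somewhat informally) in its construction, so there is no substantive divergence.
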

\begin{proof}
It is easy to see that this problem is in NP because we can verify whether a given coloring satisfies our requirements. For NP-hardness, it remains to prove correctness of the reduction above from Clause-Connected Positive NAE 3SAT.
We have already argued that, by following the steps above, we create a 2-connected 3-regular planar graph, so the instance is valid.
It remains to show that a true instance of Clause-Connected Positive NAE 3SAT
is mapped to a true instance of 2-colorable perfect matching, and vice versa.

Given a true instance of Clause-Connected Positive NAE 3SAT, we color each variable node based on the truth value of the corresponding variable. For each clause gadget, there are two nodes that have the same color, and so we pair another node with the center of the clause gadget. Thus, we have already paired all the variable nodes. Then we can color the other nodes following the gadget constructions. After that, we color each node's clone with its negated color (because they are connected by a different-colors gadget). Because of the properties of each gadget, the nonvariable nodes will be paired to each other without violating any requirement. Thus, we have a valid coloring and so a true instance of 2-colorable perfect matching.

Conversely, given a true instance of 2-colorable perfect matching, we assign the truth value of each variable based on the color of the corresponding variable node in the primary copy of the graph (ignoring the clone copy). There are many variable nodes that correspond to the same variable, but these nodes will be colored with the same color because we connected them with a chain of two opposite-colors gadgets. Because all nonvariable nodes within each gadget are forced to pair with one another, variable nodes and clause nodes are forced to pair with one another as well. Considering a clause gadget, two of the three variable nodes then have to pair to each other, and the remaining node will pair with the clause node, using the opposite color. Thus, the three values are not equal to each other for every clause, resulting in a valid assignment for the NAE 3SAT problem. 

Therefore, our reduction is correct, and so 2-colorable perfect matching on 2-connected 3-regular planar graph is NP-complete as desired.
\end{proof}

\section{\texorpdfstring{\boldmath $k$}{k}-Colorable Perfect Matching}\label{section-k_colors}
In this section, we prove that $k$-colorable perfect matching is NP-complete
(without the additional properties of 2-connected, 3-regular, or planar).
We follow the same proof structue as Section~\ref{section-planar},
but use a new gadget to constrain the coloring.

\subsection{\texorpdfstring{\boldmath $(k-2)$}{(k-2)}-Color Gadget}
%Similar to the proof of our main result, we create a new gadget to constrain the coloring and change the $k$-colors version into the 2-colors version problem. 

A \defn{$(k-2)$-color gadget} contains exactly $k-2$ colors, and
so that we can locate $k-2$ nodes that all have different colors.
As shown in Figure~\ref{figure: k-color gadget}, the gadget consists of
a path of $2k-4$ nodes, with every pair of even nodes connected together,
resulting in a path through a clique of size $k-2$.
Assume that the rest of the graph is attached to this gadget only at
even nodes, so the odd nodes are exactly as in the figure.

To show that the gadget consists of exactly $k-2$ colors,
first consider the leftmost node.
It has degree $1$, so it must match with its neighbor, with some color~$x_1$.
Then, at step $i$, the $(2i-1)$st node --- which is odd so has degree $2$ ---
must match with the $(2i)$th node (as the $(2i-2)$nd node is already matched),
with some color $x_i$.
Because each node must have exactly one neighbor with the same color,
and all $x_i$s are used in the clique of even nodes,
the $x_i$s must all be different.
Thus, this gadget contains exactly $k-2$ colors,
and we can locate the positions of these $k-2$ colors, as desired.

\begin{figure}[h]
    \centering
    \includegraphics[scale=0.4]{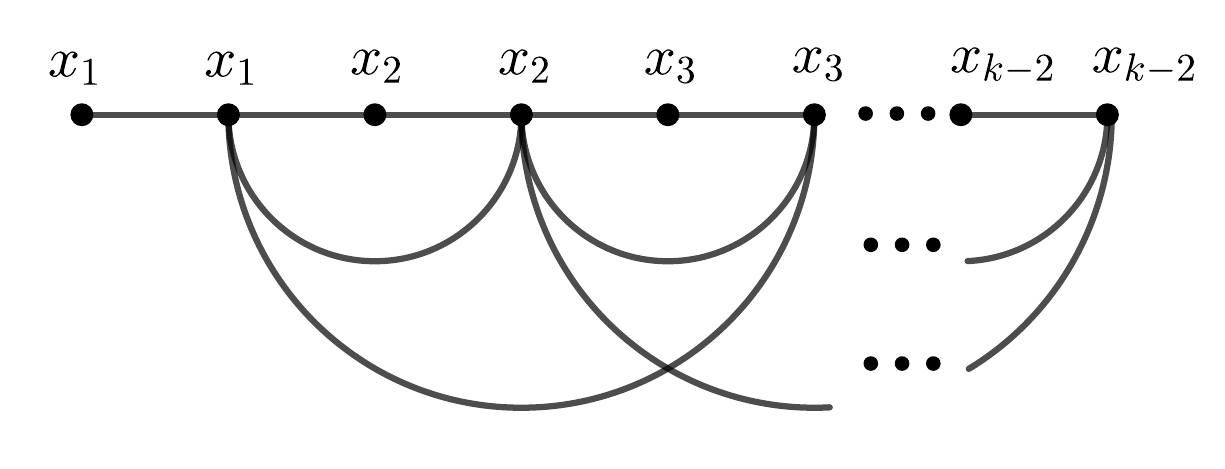}
    \caption{$(k-2)$-color gadget, consisting of exactly $k-2$ colors $x_1, x_2, \dots, x_{k-2}$ as shown, where $x_i$ can be any of the $k$ colors. Other nodes in the graph can connect to the even vertices of this gadget without interfering with its function.}
    \label{figure: k-color gadget}
\end{figure}

\subsection{Reduction}

Now we prove that $k$-colorable perfect matching is NP-hard by reducing
from 2-colorable perfect matching
(as proved by Schaefer \cite{schaefer1978complexity}).
Given a graph $G$, we create our $(k-2)$-color gadget,
and connect the $k-2$ nodes with the $k-2$ colors (the even nodes of the gadget)
to every node in $G$, as shown in Figure~\ref{figure: reducing k colors}.
%(The dotted lines connect each point in the graph, represented the red circle region, with the even nodes of the gadget.)
Because each node in the gadget is already matched,
the nodes in $G$ cannot have the same color as any node in the gadget.
Thus, each node in $G$ must be colored with one of the two remaining colors.
This instance of $k$-colorable perfect matching is then equivalent
to 2-colorable perfect matching in~$G$.
%It is easy to see that $k$-colorable perfect matching is in NP, because we can verify the coloring in polynomial time. Therefore, $k$-colorable perfect matching problem is NP-complete as desired. 

\begin{figure}[h]
    \centering
    \includegraphics[scale=0.4]{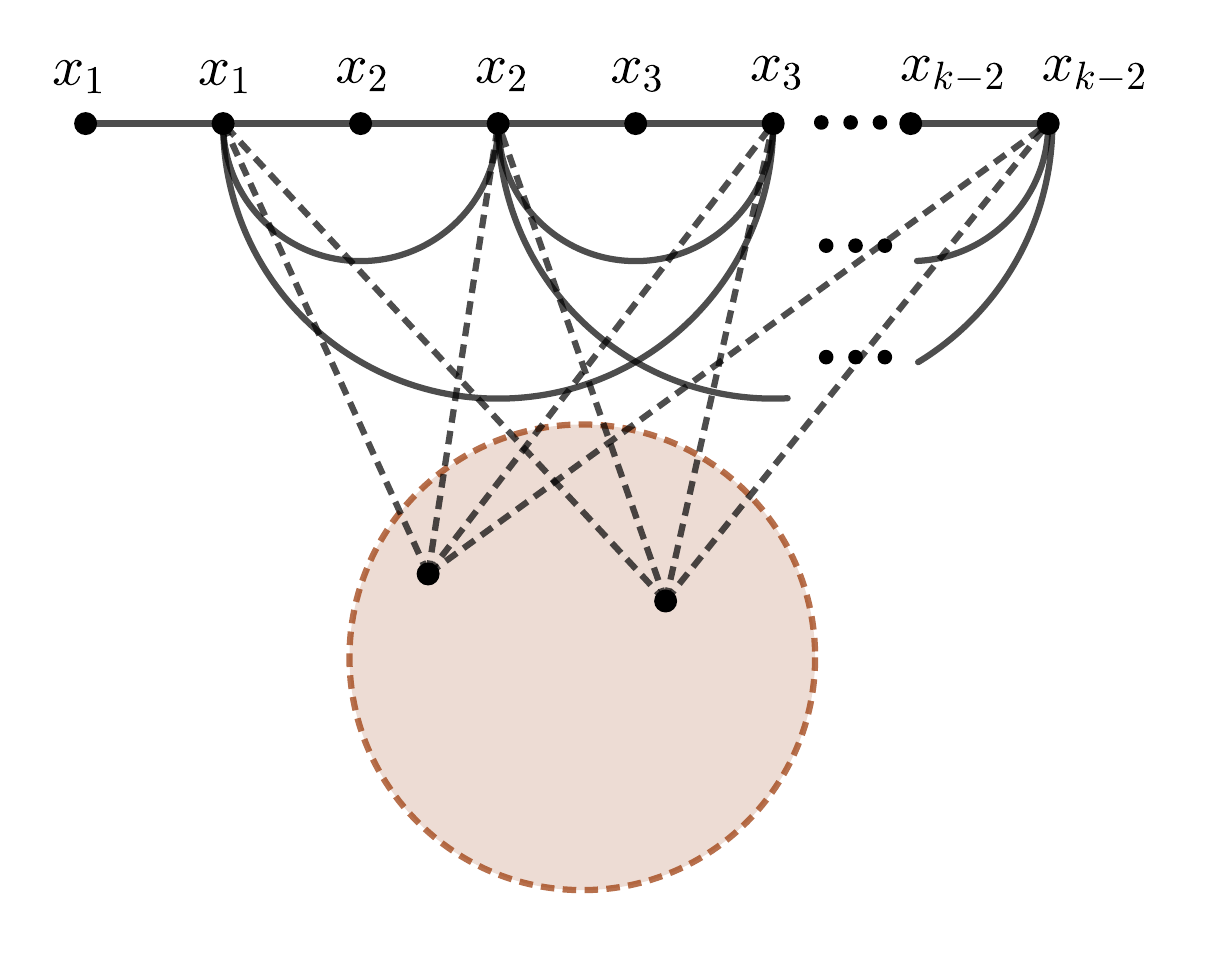}
    \caption{Connecting a $(k-2)$-color gadget (top path) with every node from the original graph $G$ (red circle), forcing each node to have only two possible color choices.}
    \label{figure: reducing k colors}
\end{figure}

\begin{theorem}
  $k$-colorable perfect matching in general graphs is NP-complete.
\end{theorem}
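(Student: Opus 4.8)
The plan is to prove membership in NP and NP-hardness separately, exactly as in the previous theorem. For membership, I would note that a proposed $k$-coloring of the vertices is a polynomial-size certificate, and checking that every vertex has exactly one neighbor of its own color takes polynomial time. This is immediate and deserves only a sentence.

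For NP-hardness, I would carry out the reduction from 2-colorable perfect matching (which Schaefer proved NP-hard, and which we reproved above in the stronger planar 3-regular 2-connected form) described in the preceding subsection. First I would restate the construction: given an input graph $G$, attach a single $(k-2)$-color gadget and add an edge from each of its $k-2$ even nodes to every vertex of $G$. This is clearly polynomial time. Then I would verify correctness in two directions. By the analysis of the $(k-2)$-color gadget, in any valid $k$-coloring of the constructed graph the even nodes realize $k-2$ distinct colors and each of them is already matched within the gadget; hence no vertex of $G$ may use any of those $k-2$ colors (doing so would give that even node a second same-colored neighbor), so every vertex of $G$ is colored with one of the two remaining colors, say colors $k-1$ and $k$. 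Moreover, since each even node is matched inside the gadget, none of the new edges between the gadget and $G$ is a matching edge, so the matching condition on a vertex $v \in G$ depends only on $v$'s neighbors within $G$. Therefore a valid $k$-coloring of the constructed graph restricts to a valid 2-coloring-with-perfect-matching of $G$. Conversely, given a valid 2-coloring of $G$ using colors $k-1$ and $k$, I would color the gadget canonically (forcing the even-node colors $x_1,\dots,x_{k-2}$ to be the colors $1,\dots,k-2$, with the odd nodes matched to them) to obtain a valid $k$-coloring of the whole graph: gadget nodes are matched internally and unaffected by the new edges, and vertices of $G$ keep their neighbors and matching partners inside $G$. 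Hence the two instances are equivalent, completing the reduction.

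The main obstacle, such as it is, is making precise that the added gadget–to–$G$ edges never serve as matching edges and never create an illegal same-color adjacency; once the behavior of the $(k-2)$-color gadget (already established) is invoked, both points follow because the $k-2$ even-node colors are pairwise distinct and disjoint from the two colors available to $G$, and each even node's unique same-color neighbor lies inside the gadget. I do not expect any genuine difficulty beyond bookkeeping, so the proof will be short and will mostly point back to the gadget analysis and to Schaefer's hardness result for the $k=2$ base case.
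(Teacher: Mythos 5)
Your proposal is correct and follows essentially the same approach as the paper: reduce from 2-colorable perfect matching by attaching a $(k-2)$-color gadget and joining its $k-2$ even nodes to every vertex of $G$, so that $G$'s vertices are restricted to the two remaining colors and the gadget--$G$ edges never carry matching pairs. In fact you spell out the two directions of equivalence more explicitly than the paper, which leaves the theorem's proof implicit in the preceding construction.
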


This reduction does not preserve 2-connectivity, 3-regularity, or planarity,
so it does not help to start from the stronger forms of 2-colorable perfect
matching from Section~\ref{section-planar}.
We leave as open problems whether $k$-colorable perfect matching remains
NP-complete when the graph is restricted to be 2-connected, 3-regular,
and/or planar.

%\section{Conclusion}\label{section-conclusion}
%\input{conclusion.tex}

\section*{Acknowledgments}

This work was initiated during open problem solving in the MIT class on
Algorithmic Lower Bounds: Fun with Hardness Proofs (6.892)
taught by Erik Demaine in Spring 2019.
We thank the other participants of that class
for related discussions and providing an inspiring atmosphere.

\bibliographystyle{alpha}
\bibliography{ref.bib}

\end{document}